\begin{document}

\allowdisplaybreaks

\renewcommand{\thefootnote}{$\star$}

\newcommand{\arXivNumber}{1506.08675}

\renewcommand{\PaperNumber}{089}

\FirstPageHeading

\ShortArticleName{On the Relationship between Two Notions of Compatibility for Bi-Hamiltonian Systems}

\ArticleName{On the Relationship between Two Notions\\
 of Compatibility for Bi-Hamiltonian Systems\footnote{This paper is a~contribution to the Special Issue
on Analytical Mechanics and Dif\/ferential Geometry in honour of Sergio Benenti.
The full collection is available at \href{http://www.emis.de/journals/SIGMA/Benenti.html}{http://www.emis.de/journals/SIGMA/Benenti.html}}}

\Author{Manuele SANTOPRETE}

\AuthorNameForHeading{M.~Santoprete}

\Address{Department of Mathematics, Wilfrid Laurier University, Waterloo, ON, Canada}
\Email{\href{mailto:msantoprete@wlu.ca}{msantoprete@wlu.ca}}

\ArticleDates{Received June 30, 2015, in f\/inal form November 03, 2015; Published online November 07, 2015}

\Abstract{Bi-Hamiltonian structures are of great importance in the theory of integrable Hamiltonian systems. The notion of compatibility of symplectic structures is a key aspect of bi-Hamiltonian systems. Because of this,
   a few dif\/ferent notions of compatibility have been introduced. In this paper we show that, under some additional assumptions, compatibility in the sense of Magri implies a notion of compatibility due to Fass\`o and Ratiu, that we dub bi-af\/f\/ine compatibility. We present two proofs of this fact. The f\/irst one uses the uniqueness of the connection parallelizing all the Hamiltonian vector f\/ields tangent to the leaves of a~Lagrangian foliation. The second proof  uses  Darboux--Nijenhuis coordinates  and   symplectic connections.}

\Keywords{bi-Hamiltonian systems; Lagrangian foliation; bott connection; symplectic connections}

\Classification{70H06; 70G45; 37K10}

\renewcommand{\thefootnote}{\arabic{footnote}}
\setcounter{footnote}{0}

\section{Introduction}
Let $ M $ be a smooth manifold of dimension $ 2n $, and let $ X $ be a vector f\/ield on~$M$. Suppose that~$ X $ is Hamiltonian with respect to two dif\/ferent symplectic structures~$ \omega _1 $ and $~\omega _2$, that is,
\begin{gather*}
    \mathbf{i} _{ X } \omega _1 = \mathbf{d} H \qquad \text{and} \qquad \mathbf{i} _{ X } \omega _2 = \mathbf{d} K,
\end{gather*}
where $ H $ and $ K $ are two, possibly distinct, Hamiltonian functions. Let us  introduce the so called
{\it recursion operator}  $ N = \omega _2 ^\sharp \omega _1 ^\flat \colon TM \to TM $, where  $ \omega ^\flat \colon TM \to T ^\ast M $ denotes the ``musical'' isomorphism induced by the symplectic form $ \omega $, and $ \omega ^\sharp $ is its inverse.  Then, as a consequence of the fact that $ X $ is Hamiltonian with respect to two symplectic forms, the flow  associated to $ X $ preserves the eigenvalues of the recursion operator. Hence, if $ N $ has  $n $  functionally independent eigenvalues  in involution, then, it is completely integrable via the Liouville--Arnold theorem.

A natural approach to integrability is to try to f\/ind suf\/f\/icient conditions for the eigenvalues of the recursion operator to be in involution. Several  suf\/f\/icient conditions of this type have been found.

One such condition is  based on the pioneering work of Magri \cite{Magri_simple_1978} in the inf\/inite-dimensional case. Magri and Morosi \cite{Magri_geometrical_1984} showed that, if  the sum of the Poisson tensor associated to $ \omega _1 $ and the one associated to $ \omega _2 $ is still a Poisson tensor, then the eigenvalues of the recursion operator are in involution. A similar claim is also present in the work of Gel'fand and Dorfman \cite{Gelfand_Hamiltonian_1979}. In this case we say that $ \omega _1 $ and $ \omega _2 $ are {\it Magri-compatible},  the triple $ (M, \omega _1 , \omega _2) $ is a {\it bi-Hamiltonian manifold}, and the quadruple  $ (M , \omega _1 , \omega _2 , X) $  is a {\it bi-Hamiltonian system} in Magri's sense if there exist functions $ H _1 $ and $ H _2 $ such that $ X = \omega _1 ^\sharp  \cdot \mathbf{d} H _1 = \omega _2 ^\sharp \cdot   \mathbf{d} H _2$. However, it is known that not all completely integrable Hamiltonian system are bi-Hamiltonian in Magri's sense. In fact there are results by Brouzet~\cite{Brouzet_Systemes_1990} and Fernandes~\cite{Fernandes_Completely_1994} indicating that there are completely integrable Hamiltonian system that are not bi-Hamiltonian in Magri's sense.  This limitation of Magri's def\/inition stimulated the search for dif\/ferent notions of compatibility.

Bogoyavlenskij has given a suf\/f\/icient condition that he calls {\it strong dynamical compatibi\-li\-ty}~\cite{Bogoyavlenskij_Theory_1996}. This condition requires the existence of a vector f\/ield $X$ that is Hamiltonian with respect to two symplectic structures~$ \omega _1 $ and~$ \omega _2 $, is completely integrable with respect to $ \omega _1 $, and it is {\it non-degenerate} in the following sense. The orbits of~$X$ lie on Lagrangian tori and, in any local system of $ \omega _1 $-action-angle coordinates~$ (a, \alpha)$, the Hamiltonian function~$ H _1 $ of $X$  associated to~$ \omega _1 $ satisf\/ies the following  equation
\begin{gather*}
    \det \left( \frac{ \partial ^2 H _1 } { \partial a _i \partial a _j } \right)(a) \neq 0.
\end{gather*}

A third notion of compatibility, was introduced by Fass\`o and Ratiu (see \cite{Fasso_Compatibility_1998}) in order to study superintegrable systems, that is, systems with  more than $n$ independent integrals of motion, and with motions  on isotropic tori of dimension less than $ n $, rather than on Lagrangian tori of dimension $n$. Let $ \omega _1 $ and $ \omega _2 $ be two symplectic forms on $M$. We say that a f\/ibration (foliation) is bi-Lagrangian if the f\/ibers (leaves) are Lagrangian with respect to both symplectic forms. Suppose there exist a bi-Lagrangian f\/ibration of~$ M $. We say that $ \omega _1 $ and $ \omega _2 $ are {\it bi-affinely compatible}  if the Bott connection  (see Section~\ref{section2} for a~def\/inition)  associated to $ \omega _1 $ and the one associated to $ \omega _2 $ coincide.

Later we will explain the notion of Magri compatibility and bi-af\/f\/ine compatibility  in more detail.
In  \cite{Fasso_Compatibility_1998} Fass\`o and Ratiu wrote:

``It is not known to us whether our def\/inition (as well as Bogoyavlenskij's) is more general than Magri's. If $ \omega _1 $ and $ \omega _2 $ are Magri compatible, then the eigenvalues of the recursion ope\-ra\-tor (if independent) def\/ine a bi-Lagrangian foliation. However, even when this foliation is a~f\/ibration and has compact and connected f\/ibers, it is not clear whether, using the terminology of Def\/inition~2, it is bi-af\/f\/ine.''

This paper will be devoted to showing that Magri compatibility implies bi-af\/f\/ine compatibility in the simple case where the recursion operator has the maximal number of distinct eigenvalues. Note, however, that the converse is not true, since, as shown in~\cite{Bogoyavlenskij_Theory_1996} and \cite{Fasso_Compatibility_1998},  there exist bi-af\/f\/inely compatible structures that are not compatible in Magri's sense. We believe it should be possible to tackle the general case by using Turiel's classif\/ication of Magri-compatible bi-Hamiltonian structures (see \cite{Olver_Canonical_1990,Turiel_Classification_1994}).

We are aware of three dif\/ferent ways of proving that Magri compatibility implies bi-af\/f\/ine compatibility.
The f\/irst proof uses the property that the  Bott connection is the  unique   connection parallelizing all the Hamiltonian vector f\/ields tangent to the leaves of a Lagrangian foliation.
The second proof employs Darboux--Nijenhuis coordinates and the fact (proved in Proposition~\ref{prop:Bott}) that  the restriction of a torsion-free symplectic connection to an involutive Lagrangian distribution coincides with the Bott connection in~$L$.
A third proof can be obtained  directly by  using  the def\/inition of Bott's connection given in  equation~\eqref{eqn:Bott_connection_0}.
In this paper we will  present only the f\/irst two proofs.

The f\/irst proof is more direct and has the advantage of avoiding the introduction of Darboux--Nijenhuis coordinates and symplectic connections. The second proof, on the other hand, shows that  Magri's compatibility condition allows to construct explicitly, in Darboux--Nijenhuis coordinates,  two symplectic connections that  have a special form.

Recall that, while Magri compatibility implies bi-af\/f\/ine compatibility, the converse, as mentioned above, does not hold.  Since the restriction of a symplectic connection to an integrable Lagrangian distribution~$L$  is the Bott connection in $L$ (as shown in  Proposition~\ref{prop:Bott}) and  bi-af\/f\/ine compatibility, by def\/inition,  concerns itself only with the Bott connection, it seems clear that that the dif\/ference between the two types of compatibility lies in the restrictions Magri's condition imposes on the allowable symplectic connections. Therefore,  in our opinion, the link between symplectic connections and the notion of compatibility for bi-Hamiltonain systems deserves further investigation.

\section{Partial and symplectic connections}\label{section2}

In this section we briefly review some facts about partial connections and symplectic connections. We refer the reader to~\cite{Forger_Lagrangian_2013} for further details.

\subsection{Bott connection}

We f\/irst recall the def\/inition of a partial connection.
\begin{definition} Let $M$ be a manifold, $V$ a vector bundle over $M$, $L$ a distribution on~$M$. Let $ \Gamma (V) $  denote the space of sections of $V$, and $ \Gamma (L) $ the space of vector f\/ields tangent to~$L $. A~{\it partial linear connection } in $V$ along $L$ is a bilinear map
    \begin{gather*}
        \nabla \colon \  \Gamma (L)   \times   \Gamma (V) \to \Gamma (V)\colon \ (X, Y) \to \nabla _X Y,
    \end{gather*}
    such that
    \begin{enumerate}\itemsep=0pt
        \item[1)] $ \nabla _{ f X }  Y = f \nabla _{ X } Y $ (i.e., $ \nabla $ is linear in $X$),
        \item[2)]  $ \nabla _X (f Y) = f \nabla _X Y + (X [f]) Y $ (i.e., $ \nabla $  is a derivation),
    \end{enumerate}
    for $ X \in \Gamma (L) $, $ f \in C ^{ \infty } (M) $, and $ Y \in \Gamma (V) $.
\end{definition}

Suppose $ L $ is an involutive distribution on $M$, and let $ L ^\perp $ be the {\it annihilator} of~$L$, that is, the vector subbundle of~$ T ^\ast M $  consisting of $1$-forms that vanish on $L$. The partial linear connection~$ \tilde \nabla ^B $ in~$ L ^\perp $ along~$L$ def\/ined by
\begin{gather*}
    \tilde \nabla _X ^B \alpha = \mathcal{L} _{ X } \alpha, \qquad \mbox{for} \quad X \in \Gamma (L) , \quad \alpha \in \Gamma \big(L ^\perp\big)
\end{gather*}
is the {\it Bott connection} associated with the distribution $L$.

Now assume that $ ( M , \omega) $ is an almost-symplectic manifold (that is, $ \omega $ is a non-degenerate 2-form), and~$L$ is an involutive Lagrangian distribution with respect to~$\omega$. By Frobenius theorem $L$ is also integrable, that is, each point of~$M $ is contained in an integral manifold of $L $. Moreover, the collection of all maximal connected integral manifolds of $L$ forms a  foliation of~$M$ (see~\cite{Lee_Introduction_2003} for more details).  Since  the ``musical'' isomorphism $ \omega ^\flat \colon T M \to T ^\ast M $ and its inverse $ \omega ^\sharp \colon T ^\ast M \to T M $ restrict to $ \omega ^\flat \colon \Gamma (L)   \to \Gamma (L ^\perp) $ and  $ \omega ^\sharp \colon \Gamma (L ^\perp) \to \Gamma (L)  $ we can use them  to def\/ine a partial linear connection in~$L$ along $L$.

\begin{definition} \label{dfn:Bott_connection} Let $L$ be an involutive  Lagrangian distribution on the almost symplectic mani\-fold~$(M , \omega) $. The partial connection
    \begin{gather*}
        \nabla ^B \colon \ \Gamma (L) \times \Gamma (L) \to \Gamma (L) \colon \ (X,Y) \to \nabla _X Y
    \end{gather*}
    def\/ined by
    \begin{gather}\label{eqn:Bott_connection_0}
        \nabla _X ^B Y = \omega ^\sharp \mathcal{L} _{ X } (\omega ^\flat Y)
    \end{gather}
    is  called, by an abuse of terminology, the {\it Bott connection} in $L$.
\end{definition}

The Bott connection can also be def\/ined with the following formula:
\begin{gather}\label{eqn:Bott_connection}
    \omega \big(\nabla _X ^B Y , Z\big) = X [\omega (Y, Z) ] - \omega (Y , [X,Z])
\end{gather}
for $ X, Y \in \Gamma (L)$, $Z \in \Gamma (T M) $.

It can be shown that $ \nabla ^B $  def\/ines a flat partial connection. See~\cite{Forger_Lagrangian_2013} for more details.
However, in general, the Bott connection is not torsion-free, in fact we have the following proposition.

\begin{proposition}
  Let $(M, \omega) $ be an almost symplectic manifold, and let $L$ be an involutive Lagrangian distribution on~$M$. Then, if $ \omega $ is closed, the Bott connection $ \nabla ^B $ in $L$ has zero torsion. More generally, the torsion tensor $  T ^B$ of $ \nabla ^B $, defined by
  \begin{gather*}
  T ^B (X,Y)  = \nabla _X ^B Y - \nabla _Y ^B X - [X,Y]
   \end{gather*}
  is related to the exterior derivative of $ \omega $ by the formula
  \begin{gather*}
      \mathbf{d} \omega (X, Y , Z) = \omega \big(T ^B (X,Y) , Z\big)\qquad  \text{for} \quad X,Y \in \Gamma (L) , \quad Z \in \Gamma (T M).
  \end{gather*}
 \end{proposition}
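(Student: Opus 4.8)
The plan is to establish the general torsion identity by a direct computation from the defining formula for $\nabla^B$, and then to deduce the closed case immediately from non-degeneracy. First I would observe that, since $L$ is involutive, $[X,Y] \in \Gamma(L)$ whenever $X, Y \in \Gamma(L)$; as $\nabla^B_X Y$ and $\nabla^B_Y X$ are sections of $L$ by construction, the torsion $T^B(X,Y)$ is again a section of $L$, so the pairing $\omega(T^B(X,Y), Z)$ is meaningful for every $Z \in \Gamma(TM)$.

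Next I would expand $\omega(T^B(X,Y), Z) = \omega(\nabla^B_X Y, Z) - \omega(\nabla^B_Y X, Z) - \omega([X,Y], Z)$, applying the characterization \eqref{eqn:Bott_connection} of the Bott connection to the first two terms. This gives
\begin{gather*}
\omega\big(T ^B (X,Y), Z\big) = X[\omega(Y,Z)] - \omega(Y, [X,Z]) - Y[\omega(X,Z)] + \omega(X, [Y,Z]) - \omega([X,Y], Z).
\end{gather*}
Using the skew-symmetry of $\omega$ to write $-\omega(Y,[X,Z]) = \omega([X,Z], Y)$ and $\omega(X, [Y,Z]) = -\omega([Y,Z], X)$, the right-hand side is recognized as the Koszul (invariant) expression for the exterior derivative of a $2$-form,
\begin{gather*}
\mathbf{d}\omega(X,Y,Z) = X[\omega(Y,Z)] - Y[\omega(X,Z)] + Z[\omega(X,Y)] - \omega([X,Y],Z) + \omega([X,Z],Y) - \omega([Y,Z],X),
\end{gather*}
up to the single additional term $Z[\omega(X,Y)]$.

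The crux of the argument is that this leftover term drops out: because $X, Y \in \Gamma(L)$ and $L$ is Lagrangian, $\omega(X,Y) \equiv 0$, hence $Z[\omega(X,Y)] = 0$. It follows that $\omega(T^B(X,Y), Z) = \mathbf{d}\omega(X,Y,Z)$, which is the asserted formula. The first statement is then an immediate corollary: if $\omega$ is closed then $\omega(T^B(X,Y), Z) = 0$ for all $Z \in \Gamma(TM)$, and the non-degeneracy of $\omega$ forces $T^B(X,Y) = 0$.

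There is no serious obstacle in this argument; it is essentially a one-line computation once the Koszul formula is compared term-by-term with the torsion pairing. The only points requiring care are the sign bookkeeping in the skew-symmetrization and in the invariant derivative formula, and the conceptual observation that it is precisely the Lagrangian hypothesis on $L$ that annihilates $Z[\omega(X,Y)]$ and thereby identifies $\omega(T^B(\cdot,\cdot),\cdot)$ with $\mathbf{d}\omega$. I would also stress that, to conclude $T^B = 0$ in the closed case, one must pair against arbitrary $Z \in \Gamma(TM)$ rather than merely $Z \in \Gamma(L)$, so as to be entitled to invoke the non-degeneracy of $\omega$.
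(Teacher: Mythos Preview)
Your argument is correct: expanding $\omega(T^B(X,Y),Z)$ via the defining formula \eqref{eqn:Bott_connection}, matching terms with the Koszul formula for $\mathbf{d}\omega$, and killing the leftover term $Z[\omega(X,Y)]$ by the Lagrangian condition is exactly the right computation, and your remark about pairing against arbitrary $Z\in\Gamma(TM)$ to invoke non-degeneracy is on point. Note, however, that the paper does not actually give a proof of this proposition; it simply refers the reader to \cite{Forger_Lagrangian_2013}, so there is no in-paper argument to compare against. Your direct computation is the standard one and is essentially what one finds in that reference.
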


A proof of this statement can be found in~\cite{Forger_Lagrangian_2013}.

Another important property of the Bott connection is that it is the unique connection parallelizing all the Hamiltonian vector f\/ields tangent to the leaves of a Lagrangian foliation. More precisely we have the following theorem.

\begin{theorem}\label{thm:Lagrangian_foliation}
   Suppose $ (M , \omega) $ is a $2n$-dimensional symplectic manifold, and $U$ an open subset of~$M$. Let $ f _1 , \ldots , f _n $ be a  set of smooth functions defined on $U$  such that
   \begin{enumerate}\itemsep=0pt
       \item[$1)$] the $ f _i $ are pairwise in involution, that is, $ \{ f _i , f _j \} = 0 $,
       \item[$2)$] the differentials are linearly independent, that is, $ \mathbf{d} f _i \wedge \cdots \wedge \mathbf{d} f _n \neq 0 $.
   \end{enumerate}
   Then the $ f _i $ define a Lagrangian foliation of $U $, with leaves of the form
   \begin{gather*}
       N ^c  = \{ x\,|\,f _1(x) = c _1 , \ldots, f _n(x) = c _n \},
   \end{gather*}
  and the  Bott connection
   \begin{gather*}
       \nabla ^B  _X Y = \omega ^\sharp \mathcal{L} _X (\omega ^\flat Y)
   \end{gather*}
   is the unique connection parallelizing all the Hamiltonian vector fields tangent to the leaves.
\end{theorem}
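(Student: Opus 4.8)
The plan is to prove the statement in two independent stages: first, that the common level sets $N^c$ constitute a Lagrangian foliation whose tangent distribution $L$ carries the Hamiltonian vector fields $X_{f_i} := \omega^\sharp \mathbf{d} f_i$ as a local frame; and second, that $\nabla^B$ is the \emph{unique} partial connection on $L$ annihilating that frame.

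For the first stage I would note that condition~$2)$ makes $f = (f_1,\ldots,f_n)$ a submersion on $U$, so the level sets $N^c$ are $n$-dimensional submanifolds foliating $U$, with tangent distribution given pointwise by $L_x = \bigcap_i \ker(\mathbf{d} f_i)_x$. Each $X_{f_i}$ is tangent to the leaves because $X_{f_i}[f_j] = \{f_j,f_i\} = 0$ by condition~$1)$, and the $X_{f_i}$ are pointwise linearly independent since the $\mathbf{d} f_i$ are and $\omega^\sharp$ is an isomorphism; hence they form a local frame for the $n$-dimensional distribution $L$. Finally $\omega(X_{f_i},X_{f_j}) = \{f_i,f_j\} = 0$ shows $L$ is isotropic of dimension $n = \tfrac{1}{2}\dim M$, hence Lagrangian, and involutivity follows either from the foliation structure or from $[X_{f_i},X_{f_j}] = X_{\{f_i,f_j\}} = 0$.

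Next I would verify that $\nabla^B$ parallelizes every Hamiltonian vector field tangent to the leaves, using formula~\eqref{eqn:Bott_connection}. Let $X_g = \omega^\sharp\mathbf{d} g$ be tangent to $L$ and let $X \in \Gamma(L)$. Writing $\omega(X_g,Z) = \mathbf{d} g(Z) = Z[g]$ and expanding the bracket term via $[X,Z][g] = X[Z[g]] - Z[X[g]]$, formula~\eqref{eqn:Bott_connection} collapses to
\begin{gather*}
  \omega\big(\nabla^B_X X_g, Z\big) = X[Z[g]] - [X,Z][g] = Z[X[g]].
\end{gather*}
Since $X$ and $X_g$ are both sections of the Lagrangian distribution $L$, the function $X[g] = \mathbf{d} g(X) = \omega(X_g,X)$ vanishes identically, so the right-hand side is zero for every $Z$; non-degeneracy of $\omega$ then yields $\nabla^B_X X_g = 0$, and in particular $\nabla^B_X X_{f_i} = 0$ for all $i$.

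Uniqueness is then essentially formal. A partial connection on $L$ is determined by its coefficients relative to any local frame, and the requirement that it annihilate the frame $\{X_{f_i}\}$ forces those coefficients to vanish, so that $\nabla_X\big(\sum_i g^i X_{f_i}\big) = \sum_i X[g^i]\,X_{f_i}$ is dictated by the Leibniz rule; since the $X_{f_i}$ are themselves Hamiltonian and tangent to the leaves, any connection parallelizing \emph{all} such vector fields must parallelize this frame and hence coincides with the one just computed for $\nabla^B$. The only genuinely substantive step is the frame computation, and the subtlety to watch there is that one needs both the tangency of $X_g$ to the leaves and the Lagrangian property of $L$ to conclude $X[g]\equiv 0$, which is precisely what makes the surviving term $Z[X[g]]$ vanish.
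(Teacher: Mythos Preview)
Your proposal is correct and follows essentially the same approach as the paper. The only cosmetic differences are that you expand the Lagrangian-foliation part (which the paper dismisses as well known), you compute $\nabla^B_X X_g = 0$ via formula~\eqref{eqn:Bott_connection} rather than by applying Cartan's formula to the defining expression $\omega^\sharp\mathcal{L}_X(\omega^\flat Y)$, and you phrase uniqueness in terms of the Leibniz rule on the frame rather than the vanishing of the difference tensor $S(X,Y)$; in all three places the underlying idea---that $\omega(X_g,X)=0$ on a Lagrangian leaf and that a connection is determined by its values on a frame---is identical.
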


\begin{proof}
    The fact that the $ f _i $ def\/ine a Lagrangian foliation is well known.
    Let $ Y = \omega ^\sharp \cdot \mathbf{d} g $ be a Hamiltonian vector f\/ield tangent to the leaves of the foliation.
   We prove that the Bott connection parallelizes the  Hamiltonian vector f\/ields tangent to the leaves.
   Clearly, we have
   \begin{gather*}
       \nabla ^B  _{ X _i } Y   = \omega ^\sharp \mathcal{L} _{ X _i }  (\omega ^\flat  Y) = \omega ^\sharp \mathcal{L} _{ X _i }  (\mathbf{d} g )
       \overset{\text{(by Cartan's formula)}}{=} \omega ^\sharp \big(\mathbf{i} _{ X _i } \mathbf{d} ^2 g + \mathbf{d} (\mathbf{i} _{ X _i } \mathbf{d} g)\big)       \\
\hphantom{\nabla ^B  _{ X _i } Y}{}
 =  \omega ^\sharp ( \mathbf{d} (\mathbf{i} _{ X _i } \mathbf{d} g)) = \omega ^\sharp (\mathbf{d}  \langle \mathbf{d} g , X _i  \rangle)
         = \omega ^\sharp (\mathbf{d}  \langle \omega ^\flat Y , X _i  \rangle)
         = \omega ^\sharp (\mathbf{d} (\omega (Y , X _i))) = 0,
   \end{gather*}
   since $ \omega (Y , X _i) = 0 $. This is because  $Y$ is tangent to the leaves and the leaves are Lagrangian submanifolds.
   Now, suppose $ X = \sum\limits_{ i = 1 } ^n  a _i X _i $, then
   \begin{gather*}
   \nabla _X Y = \sum a _i \nabla ^B _{ X _i } Y = 0.
   \end{gather*}
  Hence, $ \nabla ^B $ parallelizes all the Hamiltonian vector f\/ields $ Y = \omega ^\sharp \cdot \mathbf{d} g $ tangent to the leaves.

  To show uniqueness suppose there is another connection $ (\nabla ^B )'  $ with the same property. Then, there is  a tensor $ S (X, Y) $ such that $  (\nabla ^B )' _X Y = \nabla ^B _X Y + S (X, Y) $.
Hence, we have
\begin{gather*}
    0= \big(\nabla ^B \big)' _{X _i }  X _j  = \nabla ^B _{X _i} X _j    + S (X _i , X _j) = S (X _i , X _j).
\end{gather*}
Consequently, $ S (X, Y) = 0 $ for all $ X , Y $. In fact, let $ X = \sum \limits_{ i = 1 } ^n a _n X _i $ and $ Y = \sum\limits _{ i = 1 } ^n b _i X _i $, then
\begin{gather*}
    S (X, Y) = \sum _{ i,j } a _i b _j S (X _i , X _j) = 0
\end{gather*}
by linearity, since $  S (X _i , X _j)=0$.
\end{proof}

\begin{remark}
   Suppose the hypotheses of the theorem above are verif\/ied and let $ Y _1, \ldots, Y _n$ be Hamiltonian vector f\/ields that, at each point, span the tangent plane to the leaves of the foliation. Then, the Bott connection is the unique connection parallelizing all the vector f\/ields~$ Y _i $. The proof of this is given by a computation similar to the one in the proof of the theorem above.
\end{remark}

\subsection{Symplectic connections}
\begin{definition}
    Let $ (M , \omega) $ be an almost symplectic manifold. A {\it symplectic connection} on $ (M , \omega) $ is a bilinear map
    \begin{gather*}
    \nabla \colon \  \Gamma (T M) \times \Gamma (TM) \to \Gamma (TM)\colon \  (X ,Y) \to \nabla _X Y
    \end{gather*}
such that $ \nabla $ is a linear connection, that is,
\begin{enumerate}\itemsep=0pt
    \item[1)] $ \nabla _{ f X } = f \nabla _X Y $,
    \item[2)] $ \nabla _X (f Y) = f \nabla _X Y + (X [f]) Y$,
\end{enumerate}
and parallelizes $ \omega $, that is,
\begin{enumerate}
    \item[3)] $ (\nabla _X \omega) (Y,Z)=0$ for all $ X,Y ,Z \in \Gamma (TM) $, where $ \nabla _X \omega $ denotes the covariant derivative of~$ \omega $, given by the formula
        \begin{gather*}
        (\nabla _X \omega) (Y,Z) = X [\omega (Y,Z)] - \omega (\nabla _X Y , Z) - \omega (Y , \nabla _X Z).
        \end{gather*}
\end{enumerate}
\end{definition}
Note that here we adhere to the terminology of \cite{Forger_Lagrangian_2013}. Other authors incorporate the requirement of being torsion-free,
namely
\begin{gather*}
    T (X, Y) = \nabla _X Y - \nabla _Y X - [X,Y] = 0
\end{gather*}
for all $ X,Y  \in \Gamma (TM) $,  in the def\/inition of a symplectic connection.

The existence of torsion-free symplectic connections is ensured, in the case of symplectic manifolds, by the following proposition.
\begin{proposition}
  Let $ (M , \omega) $ be a symplectic manifold. Then, there is a torsion-free symplectic connection on~$M$.
\end{proposition}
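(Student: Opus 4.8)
The plan is to start from an arbitrary torsion-free affine connection $\nabla'$ on $M$ — such connections always exist, for instance as the Levi-Civita connection of any auxiliary Riemannian metric assembled with a partition of unity — and then to correct it by a suitable tensor so that the result parallelizes $\omega$ while remaining torsion-free. Writing the candidate connection as $\nabla = \nabla' + A$, where $A\colon \Gamma(TM)\times\Gamma(TM)\to\Gamma(TM)$ is a $(1,2)$-tensor field (hence tensorial in both arguments), I would first record how the two defining properties transform. Since $\nabla'$ is torsion-free, the torsion of $\nabla$ is $A(X,Y) - A(Y,X)$, so $\nabla$ is torsion-free precisely when $A$ is symmetric. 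For the third axiom, a direct computation from the formula defining $\nabla_X \omega$ gives
\begin{gather*}
(\nabla_X \omega)(Y,Z) = (\nabla'_X \omega)(Y,Z) - \omega(A(X,Y),Z) - \omega(Y, A(X,Z)).
\end{gather*}

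Next I would convert this into a purely algebraic problem at each point. Setting $a(X,Y,Z) = \omega(A(X,Y),Z)$ and $c(X,Y,Z) = (\nabla'_X \omega)(Y,Z)$, and using nondegeneracy of $\omega$ to pass freely between $A$ and $a$, the two requirements become: $a$ is symmetric in its first two arguments (torsion-free), and
\begin{gather*}
a(X,Y,Z) - a(X,Z,Y) = c(X,Y,Z) \qquad \text{(symplectic)}.
\end{gather*}
Two structural facts about $c$ are available. First, because $\omega$ is a $2$-form, $c$ is antisymmetric in its last two slots, $c(X,Y,Z) = -c(X,Z,Y)$. Second — and this is where closedness of $\omega$ enters — the torsion-freeness of $\nabla'$ together with $\mathbf{d}\omega = 0$ yields the cyclic identity $c(X,Y,Z) + c(Y,Z,X) + c(Z,X,Y) = 0$.

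I would then exhibit an explicit solution and verify it, defining $A$ through
\begin{gather*}
\omega(A(X,Y),Z) = \frac{1}{3}\big[(\nabla'_X \omega)(Y,Z) + (\nabla'_Y \omega)(X,Z)\big].
\end{gather*}
Symmetry in $X,Y$ is immediate, giving the torsion-free property, and tensoriality is clear since the right-hand side is tensorial in all three slots. The symplectic identity then follows by substituting this formula into $a(X,Y,Z) - a(X,Z,Y)$ and simplifying, using the antisymmetry of $c$ in the last two slots to rewrite one term and the cyclic identity to eliminate the remaining cross terms; the contributions collapse to exactly $c(X,Y,Z)$.

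The main obstacle is precisely this algebraic step: showing that the ansatz which is symmetric in $(X,Y)$ nonetheless reproduces $c$ under antisymmetrization in $(Y,Z)$. This succeeds only because both properties of $c$ hold at once, and it is the cyclic identity — hence the hypothesis that $\omega$ is closed — that forces the cross terms to cancel. This also explains why the proposition is stated for symplectic rather than merely almost-symplectic manifolds: when $\omega$ is not closed the cyclic identity fails, and in general no symmetric correction tensor can enforce $\nabla \omega = 0$.
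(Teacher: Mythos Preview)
Your argument is correct and is the standard construction. The paper itself does not supply a proof of this proposition; it simply refers the reader to \cite{Bieliavsky_Symplectic_2006}. What you have written is precisely the classical argument given there (and in many other sources): start from any torsion-free connection~$\nabla'$, set $\nabla = \nabla' + A$ with
\[
\omega(A(X,Y),Z)=\tfrac{1}{3}\big[(\nabla'_X\omega)(Y,Z)+(\nabla'_Y\omega)(X,Z)\big],
\]
and verify that symmetry of $A$ gives zero torsion while the antisymmetry of $c=\nabla'\omega$ in its last two slots together with the cyclic identity $c(X,Y,Z)+c(Y,Z,X)+c(Z,X,Y)=\mathbf{d}\omega(X,Y,Z)=0$ forces $\nabla\omega=0$. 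Your computation of $a(X,Y,Z)-a(X,Z,Y)=c(X,Y,Z)$ is exactly right, and your remark that closedness of $\omega$ is what makes the cyclic identity available is the correct explanation of why the hypothesis is needed.
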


See \cite{Bieliavsky_Symplectic_2006} for a proof.
On the other hand, if the manifold is not symplectic (i.e., $ \omega $ is not closed) a torsion-free symplectic connection does not exist, in fact we have the following proposition.

\begin{proposition}
  Let $ (M , \omega) $ be an almost symplectic manifold, and let $ \nabla $ be a symplectic connection on $ M $. Then, the torsion tensor $T$ of $ \nabla $ is related to the exterior derivative of $ \omega $ by the formula
  \begin{gather*}
  \mathbf{d} \omega (X,Y ,Z) = \omega (T (X,Y) , Z) + \omega (T (Z ,Y) , X)+ \omega (T (Z, X) , Y).
  \end{gather*}
Consequently, if there is  a torsion-free symplectic connection $ \nabla $ on $M$, $ \omega $ must be closed.
\end{proposition}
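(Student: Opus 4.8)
The plan is to compute $\mathbf{d}\omega(X,Y,Z)$ directly from the intrinsic (Koszul) formula for the exterior derivative of a $2$-form, and then to use the two defining properties of a symplectic connection to trade every term for one involving either a covariant derivative or the torsion. Concretely, for $X,Y,Z \in \Gamma(TM)$ I would start from
\begin{gather*}
\mathbf{d}\omega(X,Y,Z) = X[\omega(Y,Z)] - Y[\omega(X,Z)] + Z[\omega(X,Y)] - \omega([X,Y],Z) + \omega([X,Z],Y) - \omega([Y,Z],X).
\end{gather*}

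First I would eliminate the directional derivatives of the scalar functions. Since $\nabla$ parallelizes $\omega$, property~(3) in the definition of a symplectic connection gives $X[\omega(Y,Z)] = \omega(\nabla_X Y, Z) + \omega(Y,\nabla_X Z)$, and likewise for the other two scalar terms. Next I would eliminate the Lie brackets by writing the torsion definition in the form $[X,Y] = \nabla_X Y - \nabla_Y X - T(X,Y)$, so that, for instance, $\omega([X,Y],Z) = \omega(\nabla_X Y,Z) - \omega(\nabla_Y X,Z) - \omega(T(X,Y),Z)$, and analogously for $[X,Z]$ and $[Y,Z]$.

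Substituting both replacements into the intrinsic formula, the decisive step is to verify that all the terms containing only covariant derivatives cancel in pairs. This is the one place demanding care: the cancellation goes through only after repeatedly invoking the antisymmetry $\omega(A,B) = -\omega(B,A)$ to match up pairs such as $\omega(Y,\nabla_X Z)$ against $\omega(\nabla_X Z, Y)$. Once these vanish, what remains is exactly a sum of three torsion terms, and after reordering arguments via the antisymmetry of $T$ one arrives at the stated identity. I would regard this bookkeeping of signs, rather than any conceptual difficulty, as the main (and essentially only) obstacle.

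Finally, the consequence is immediate: if $\nabla$ is torsion-free then $T \equiv 0$, so the right-hand side of the identity vanishes identically and $\mathbf{d}\omega = 0$; that is, $\omega$ is closed. Equivalently, the existence of a torsion-free symplectic connection forces $(M,\omega)$ to be genuinely symplectic rather than merely almost symplectic.
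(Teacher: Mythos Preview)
Your approach is correct and is the standard computation: expand $\mathbf{d}\omega$ via the intrinsic formula, replace directional derivatives using $\nabla\omega=0$, replace Lie brackets using $[A,B]=\nabla_A B-\nabla_B A-T(A,B)$, and check that all covariant-derivative terms cancel pairwise by the antisymmetry of $\omega$, leaving only the three torsion terms. The consequence for torsion-free $\nabla$ is then immediate, as you say.

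There is nothing to compare against in the paper itself: this proposition is stated there without proof, as part of the background material in Section~2 that the paper attributes to \cite{Forger_Lagrangian_2013}. Your argument is exactly the one used in that reference, so you have supplied the omitted proof. One minor remark: when you carry out the cancellation you will obtain the cyclic expression $\omega(T(X,Y),Z)+\omega(T(Y,Z),X)+\omega(T(Z,X),Y)$; the middle term in the paper's displayed formula appears to carry a sign slip (it has $T(Z,Y)$ rather than $T(Y,Z)$), but this does not affect the corollary since all three terms vanish when $T\equiv 0$.
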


Suppose $L$ is a distribution on $M$. Let $\Gamma  (L) $ denote the set of all vector f\/ields tangent to~$L$, and let $ \Gamma (TM)    $ denote the set of all vector f\/ields on~$M$.
Recall that a connection $ \nabla $ is said to {\it parallelize $($or preserve$)$ a distribution}~$L$ if  $ \nabla _X Y \in \Gamma (L)   $ for all vector f\/ields $ X \in \Gamma  (TM) $  and all vector f\/ields in~$Y \in \Gamma  ( L)$.
The following result, proved in~\cite{Forger_Lagrangian_2013}, links symplectic connections and Lagrangian distributions.

\begin{proposition}
  Let $(M , \omega) $ be an almost symplectic manifold and let $L$ be a Lagrangian distribution on~$M$. If there exists a torsion-free symplectic connection $ \nabla $ on $M$ that preserves~$L$, then~$\omega $ must be closed $($that is, symplectic$)$, and~$L$ must be involutive.
  \end{proposition}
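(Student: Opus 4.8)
The statement asserts two conclusions, and the plan is to treat them separately, since each draws on a different aspect of the hypotheses.

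The closedness of $\omega$ is essentially already in hand: it is the ``consequently'' clause of the preceding proposition. That result expresses $\mathbf{d}\omega$ entirely in terms of the torsion tensor $T$ of an arbitrary symplectic connection via
\[
\mathbf{d}\omega(X, Y, Z) = \omega(T(X,Y), Z) + \omega(T(Z,Y), X) + \omega(T(Z,X), Y).
\]
Since the connection $\nabla$ supplied by the hypothesis is torsion-free, $T \equiv 0$, so the right-hand side vanishes identically and $\mathbf{d}\omega = 0$. Thus $\omega$ is closed, and $(M,\omega)$ is symplectic.

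For involutivity, I would argue directly from torsion-freeness together with the preservation of $L$, the latter being the one hypothesis not yet exploited. Given $X, Y \in \Gamma(L)$, preservation of $L$ guarantees $\nabla_X Y \in \Gamma(L)$ and $\nabla_Y X \in \Gamma(L)$, because in each case the field being differentiated lies in $L$. The torsion-free identity $\nabla_X Y - \nabla_Y X = [X,Y]$ then exhibits $[X,Y]$ as a difference of sections of $L$, forcing $[X,Y] \in \Gamma(L)$. Since $X$ and $Y$ were arbitrary sections of $L$, this is precisely involutivity.

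Neither half poses a genuine obstacle: the closedness is a restatement of a corollary already recorded, and the involutivity is a one-line consequence of the torsion-free condition. The only point demanding attention is to apply the preservation hypothesis in its correct asymmetric form, $\nabla_{(\cdot)}\Gamma(L) \subseteq \Gamma(L)$, so that both covariant derivatives $\nabla_X Y$ and $\nabla_Y X$ are legitimately sections of $L$ when $X, Y \in \Gamma(L)$.
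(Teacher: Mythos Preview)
Your proof is correct. The paper itself does not give a proof of this proposition; it merely states the result and attributes it to Forger and Yepes. Your two-step argument---obtaining $\mathbf{d}\omega=0$ directly from the preceding torsion identity with $T\equiv 0$, and deducing involutivity of $L$ from $[X,Y]=\nabla_X Y-\nabla_Y X\in\Gamma(L)$ once preservation of $L$ places both covariant derivatives in $\Gamma(L)$---is the natural and standard one.
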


Finally, we state and prove a proposition, also proved in~\cite{Forger_Lagrangian_2013}, that will be essential for the main result of this paper.

  \begin{proposition}\label{prop:Bott}
Let $ (M , \omega) $ be a symplectic manifold and let~$L$ be an involutive Lagrangian distribution, then the restriction of any symplectic torsion-free connection~$ \nabla $ preserving~$L$ to~$L$ coincides with the Bott connection in $L$.
\end{proposition}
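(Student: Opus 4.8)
The plan is to verify directly that the restriction of $\nabla$ to $L$ satisfies the defining formula~\eqref{eqn:Bott_connection} of the Bott connection. That is, for $X, Y \in \Gamma(L)$ and $Z \in \Gamma(TM)$, I would establish
\begin{gather*}
    \omega(\nabla_X Y, Z) = X[\omega(Y, Z)] - \omega(Y, [X, Z]),
\end{gather*}
and then invoke the non-degeneracy of $\omega$ to conclude that $\nabla_X Y = \nabla_X^B Y$. Note that the hypothesis that $\nabla$ preserves $L$ is already needed for the statement to make sense, since it guarantees $\nabla_X Y \in \Gamma(L)$, so that the restriction is genuinely a connection in $L$ that can be compared with $\nabla^B$.

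First I would use the hypothesis that $\nabla$ is a symplectic connection, so that $(\nabla_X \omega)(Y, Z) = 0$. Expanding the covariant derivative of $\omega$ gives
\begin{gather*}
    X[\omega(Y, Z)] = \omega(\nabla_X Y, Z) + \omega(Y, \nabla_X Z),
\end{gather*}
hence $\omega(\nabla_X Y, Z) = X[\omega(Y, Z)] - \omega(Y, \nabla_X Z)$. The task then reduces to showing that the term $\omega(Y, \nabla_X Z)$ may be replaced by $\omega(Y, [X, Z])$. To this end I would invoke the torsion-free condition: since $T(X, Z) = \nabla_X Z - \nabla_Z X - [X, Z] = 0$, we have $\nabla_X Z = [X, Z] + \nabla_Z X$, and therefore
\begin{gather*}
    \omega(Y, \nabla_X Z) = \omega(Y, [X, Z]) + \omega(Y, \nabla_Z X).
\end{gather*}

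The heart of the argument is to show that the correction term $\omega(Y, \nabla_Z X)$ vanishes, and this is where both remaining hypotheses come together. Because $\nabla$ preserves $L$ and $X \in \Gamma(L)$, the vector field $\nabla_Z X$ is again tangent to $L$; since $Y$ is likewise tangent to $L$ and $L$ is Lagrangian, $\omega$ vanishes on any pair of vectors tangent to $L$, so $\omega(Y, \nabla_Z X) = 0$. Substituting back recovers the Bott formula~\eqref{eqn:Bott_connection} exactly, and non-degeneracy of $\omega$ then yields $\nabla_X Y = \nabla_X^B Y$ for all $X, Y \in \Gamma(L)$.

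The computation itself is short; the main obstacle is recognizing which hypothesis does which job. In particular, the vanishing of $\omega(Y, \nabla_Z X)$ is precisely the point where the interplay between ``$\nabla$ preserves the Lagrangian distribution $L$'' and ``$L$ is Lagrangian'' is essential—the preservation places $\nabla_Z X$ back in $\Gamma(L)$, while the Lagrangian condition makes $\omega$ degenerate along $L$, and neither property alone suffices to kill the term.
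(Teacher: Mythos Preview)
Your argument is correct and follows the paper's proof essentially step for step: expand $(\nabla_X\omega)(Y,Z)=0$, replace $\nabla_X Z$ by $[X,Z]+\nabla_Z X$ via torsion-freeness, and kill $\omega(Y,\nabla_Z X)$ using that $\nabla$ preserves $L$ together with $L$ being Lagrangian. Your added remarks on the role of non-degeneracy and on why preservation of $L$ is needed for the restriction to make sense are welcome clarifications but do not change the strategy.
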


\begin{proof}
   From the def\/inition of   symplectic connection:
   \begin{gather*}
       \omega (\nabla _X Y, Z)   = X [\omega (Y,Z) ] - \omega (Y, \nabla _X Z)
      =     X [\omega (Y,Z) ] - \omega (Y, \nabla _Z X) - \omega (Y , [X,Z]),
   \end{gather*}
   where the second equality holds because, by hypothesis, the connection is torsion-free. Since~$ \nabla $ preserves~$L$ we have that, if $ X \in \Gamma (L)  $, then $ \nabla _Z X \in \Gamma (L) $. Furthermore, if  $ X,Y \in \Gamma (L) $, then, since the distribution is Lagrangian, we have that  $  \omega (Y, \nabla _Z X) =0 $, and
   \begin{gather*}
        \omega (\nabla _X Y, Z) =  X [\omega (Y,Z) ] - \omega (Y , [X,Z]).
   \end{gather*}
   This last equation agrees with equation~\eqref{eqn:Bott_connection} for the Bott connection.
\end{proof}

\section[Darboux-Nijenhuis coordinates]{Darboux--Nijenhuis coordinates}\label{sec:Darboux--Nijenhuis_coordinates}

 Suppose $ \omega _1 $ and $ \omega _2 $ are Magri-compatible symplectic forms on a smooth $ 2n $-dimensional mani\-fold~$ M $, and let $ N = \omega _2 ^\sharp \omega _1 ^\flat \colon  TM \to TM $ be the recursion operator.

 Magri's notion of  compatibility  can be equivalently expressed by saying that the {\it  Nijenhuis torsion} of the recursion operator vanishes for all vector f\/ields $ X$, $Y $, that is,
  \begin{gather*}
     T _N (X , Y) = [NX, N Y ] - N [NX, Y ] - N [X, NY] + N ^2 [X,Y] = 0.
 \end{gather*}
A proof of this fact can be found, for instance, in \cite{Magri_Eight_2004}.
 A tensor f\/ield with vanishing torsion is called a {\it Nijenhuis} tensor f\/ield.
 A Nijenhuis tensor f\/ield is {\it compatible} with a symplectic form~$ \omega $~if
 \begin{enumerate}\itemsep=0pt
          \item[1)] $ \omega ^\flat  N = N ^\ast \omega ^\flat$,
          \item[2)] $ \mathbf{d} \omega (N X , Y, \cdot) - \mathbf{d} \omega (NY,X, \cdot) + \mathbf{d} \Omega (Y , X, \cdot ) = 0 $ for all $ X$, $Y $,
  \end{enumerate}
  where $ N ^\ast $ is the adjoint tensor of $N$, and $ \Omega $ is def\/ined by the following expression
  \begin{gather*} \Omega (X , Y)   =  \langle \Omega ^\flat X, Y  \rangle =  \langle (\omega ^\flat N) X,Y \rangle.
  \end{gather*}
 The f\/irst condition ensures $ \Omega $ is skew symmetric, while the second ensures that $ \Omega $ is closed.

A triple $ (M , \omega , N) $, where $ \omega $ is a symplectic form and $ N $ is a compatible Nijenhuis tensor f\/ield, is called an   $\omega  N$ {\it manifold}. The def\/inition of $ \omega N $ manifold f\/irst appeared in the work of Magri and Morosi~\cite{Magri_geometrical_1984}.
A manifold $M$ with two Magri-compatible symplectic forms is an important example of an~$ \omega N $ manifold.

If $ (M, \omega _1 , \omega _2 , X) $ is  a bi-Hamiltonian system  (in Magri's sense) then it is possible to use the recursion operator to create a sequence of functions in involution that commute with  the Hamiltonians~$ H _1 $ and~$ H _2 $. Then, if a suf\/f\/icient number of integrals are functionally independent, the system is completely integrable. Such sequence of functions can be constructed by using the trace of powers of the recursion operator as follows
\begin{gather*}
I _k = \frac{ 1 } { k } \operatorname{Tr} \big(N ^k\big).
\end{gather*}
Since $ T _N (X, Y) = 0 $ it can be shown that the dif\/ferentials $ \mathbf{d} I _k $ satisfy the Lenard recursion relation $ N ^\ast  \mathbf{d} I _k = \mathbf{d} I _{ k + 1 } $, or equivalently, since $ N ^\ast = \omega _1 ^\flat \omega _2 ^\sharp $, they satisfy $ \omega _2 ^\sharp \cdot \mathbf{d} I _k = \omega _1 ^\sharp \cdot \mathbf{d} I _{ k + 1 } $.
Since the $ \mathbf{d} I _k $'s fulf\/ill the recursion relation it is easy to show that $ \{ I _i , I _j \} _1 = \{ I _i , I _j \} _2 = 0 $, where $ \{ \,,\, \} _1 $ and $ \{\, ,\, \} _2 $ are the Poisson brackets associated with $ \omega _1 $ and $ \omega _2 $, respectively. Thus, the~$ I _j $ are in involution. For a more detailed account of this construction we refer the reader to~\cite{Magri_Eight_2004, Magri_geometrical_1984}.
It is convenient to give the following def\/inition (see for example~\cite{Tondo_Generalized_2006}):
\begin{definition}
    A point $p$ of an $ \omega N $ manifold is a {\it regular point} if $N$ has the maximal number $ n = \frac{1}{2} \dim (M) $ of distinct, functionally independent, eigenvalues $ \lambda _1 , \lambda _2 ,\ldots , \lambda _n $ (i.e., the dif\/ferentials $ \mathbf{d} \lambda _1 , \ldots , \mathbf{d} \lambda _n $ are linearly independent at $p$). An open set $ U \subset M $ is called {\it regular} if each point of $U$ is a regular point.
\end{definition}

We remark that, in the def\/inition above, it is enough to require that the eigenvalues are distinct and non-constant, since the latter automatically implies their independence.

Furthermore, note that, in a regular open set $U$, since the~$ \lambda _i $ are functionally  independent, it follows  that the~$ I _i $'s are also  functionally independent (see, for instance,~\cite{Falqui_Poisson_2011} for a proof), and so the~$ I _i $ def\/ine a  bi-Lagrangian  foliation, namely, a foliation Lagrangian with respect to~$ \omega _1 $ and~$ \omega _2 $. In this setting, it can also be shown  that the eigenvalues are in involution $ \{ \lambda _i , \lambda _j \}  _1 = \{ \lambda _i , \lambda _j \} _2 = 0 $  (see~\cite{Falqui_Poisson_2011, Magri_Eight_2004}), and the $ \lambda _i $  also def\/ine the same  bi-Lagrangian  foliation.
With these def\/initions we have the following proposition.

\begin{proposition}\label{prop_DN}
  Let $ (M , \omega _1  , N) $ be an $ \omega N $ manifold. Each regular point has an open neighborhood where there exist coordinates $ (\boldsymbol{ \lambda } , \boldsymbol{ \mu }) = (\lambda _1 , \ldots , \lambda _n , \mu _1 , \ldots, \mu _n)$ $($where the $ \lambda _i $'s are the eigenvalues of $ N )$, called Darboux--Nijenhuis coordinates, satisfying the following two properties:
  \begin{enumerate}\itemsep=0pt
      \item[$1)$] $ \omega _1  = \sum _i \mathbf{d} \lambda _i  \wedge \mathbf{d} \mu _i  $, that is, they are Darboux-coordinates for $ \omega _1  $,
      \item[$2)$] $ N ^\ast \mathbf{d} \lambda _i = \lambda _i \mathbf{d} \lambda _i $ and $ N ^\ast \mathbf{d }  \mu _i = \lambda _i \mathbf{d} \mu _i $.
  \end{enumerate}
\end{proposition}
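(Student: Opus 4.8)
The plan is to produce the coordinates in two stages: first exhibit an $\omega_1$-Lagrangian foliation adapted to $N$ and complete it to a Darboux chart for $\omega_1$ whose position coordinates are the eigenvalues, and then adjust the conjugate coordinates so that they too become eigenforms of $N^\ast$.

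The first ingredient is the spectral lemma that the differentials of the eigenvalues are eigenforms,
\[
N^\ast \mathbf{d}\lambda_i = \lambda_i\,\mathbf{d}\lambda_i ,
\]
which is already one half of property~2). This is a consequence of the vanishing of the Nijenhuis torsion: writing $T_N(X,Y)=0$ and specializing $X$ to an eigenvector field of $N$, one finds that $N^\ast$ preserves the codistribution spanned by the $\mathbf{d}\lambda_i$ and acts on it diagonally. Combined with the compatibility identity $\omega_1^\flat N = N^\ast\omega_1^\flat$ (equivalently $N\omega_1^\sharp=\omega_1^\sharp N^\ast$), this shows that the Hamiltonian vector fields $X_i:=\omega_1^\sharp\mathbf{d}\lambda_i$ satisfy $N X_i=\lambda_i X_i$, so each leaf of $\{\lambda=\mathrm{const}\}$ is spanned by eigenvectors of $N$, one from each eigenspace. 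Since on a regular open set the $\lambda_i$ are pairwise in involution and functionally independent, the $X_i$ commute and span an $\omega_1$-Lagrangian foliation $\mathcal{F}$.

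Next I would invoke the Darboux--Carath\'eodory theorem to complete $(\lambda_1,\dots,\lambda_n)$ to a canonical chart $(\boldsymbol\lambda,\boldsymbol\mu)$ adapted to $\mathcal F$, so that $\omega_1=\sum_i\mathbf{d}\lambda_i\wedge\mathbf{d}\mu_i$; this is property~1). In such a chart $\omega_1^\flat\partial_{\mu_i}=-\mathbf{d}\lambda_i$, so the eigenform relation above immediately gives $N\partial_{\mu_i}=\lambda_i\partial_{\mu_i}$, while the eigenform property $N^\ast\mathbf{d}\lambda_i=\lambda_i\mathbf{d}\lambda_i$ forces the expansion
\[
N\partial_{\lambda_i}=\lambda_i\partial_{\lambda_i}+\sum_k c_i^{\,k}\,\partial_{\mu_k}.
\]
A short computation using the $\omega_1$-symmetry of $N$ (that is, $\omega_1(NX,Y)=\omega_1(X,NY)$) shows that the matrix $c_i^{\,k}$ is antisymmetric, so in particular $c_i^{\,i}=0$; hence, since at a regular point each $\lambda_i$ has algebraic multiplicity two, the eigenspace $\ker(N-\lambda_i)$ is spanned by $\partial_{\mu_i}$ together with a suitable correction of $\partial_{\lambda_i}$.

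The remaining, and main, step is to absorb the off-diagonal coefficients $c_i^{\,k}$. The only freedom preserving both the Darboux form of $\omega_1$ and the foliation $\mathcal F$ is $\mu_i\mapsto\mu_i+\partial_{\lambda_i}G$ for a function $G$ of $\boldsymbol\lambda$ alone; under it the coefficients transform as $c_i^{\,k}\mapsto c_i^{\,k}+(\lambda_i-\lambda_k)\,\partial_{\lambda_i}\partial_{\lambda_k}G$, so one wants to solve $\partial_{\lambda_i}\partial_{\lambda_k}G=-c_i^{\,k}/(\lambda_i-\lambda_k)$ for $i\neq k$. The hard part is precisely the solvability of this overdetermined system: one must verify that the right-hand side depends only on $\boldsymbol\lambda$ and satisfies the Hessian integrability conditions. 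This is exactly where the full strength of $T_N=0$ (together with $\mathbf{d}\omega_1=0$) enters, beyond the mere semisimplicity of $N$; conceptually it is what guarantees that the transverse eigendirections assemble into a second integrable Lagrangian foliation, so that $G$ exists. Once $G$ is found, the new chart satisfies $N\partial_{\lambda_i}=\lambda_i\partial_{\lambda_i}$, equivalently $N^\ast\mathbf{d}\mu_i=\lambda_i\mathbf{d}\mu_i$, completing property~2).
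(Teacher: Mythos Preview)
The paper does not actually give a proof of this proposition: immediately after the statement it says only ``See \cite{Magri_Eight_2004} for a sketch of the proof of this statement.'' So there is no in-paper argument to compare your proposal against; the proposition is quoted as a known result from the literature.

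Your outline is the standard strategy for constructing Darboux--Nijenhuis coordinates (start from the eigenvalue foliation, complete to a Darboux chart, then straighten the transverse part of $N$), and the computations you do carry out---the antisymmetry $c_i^{\,k}=-c_k^{\,i}$ from $\omega_1(NX,Y)=\omega_1(X,NY)$ and the transformation law $c_i^{\,k}\mapsto c_i^{\,k}+(\lambda_i-\lambda_k)\,\partial_{\lambda_i}\partial_{\lambda_k}G$ under the canonical gauge $\mu_i\mapsto\mu_i+\partial_{\lambda_i}G(\boldsymbol\lambda)$---are correct.

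That said, the proof as written is a sketch rather than a proof, because the decisive step is only named, not performed. You yourself flag that one must show (i) the coefficients $c_i^{\,k}$ depend on $\boldsymbol\lambda$ alone, and (ii) the functions $-c_i^{\,k}/(\lambda_i-\lambda_k)$ satisfy the Hessian integrability conditions so that $G$ exists; neither is verified. Point~(i) is not automatic: nothing in the construction of the initial Darboux chart prevents $c_i^{\,k}$ from depending on $\boldsymbol\mu$, and if it does, no gauge $G=G(\boldsymbol\lambda)$ can remove it. Both (i) and (ii) are consequences of $T_N=0$, but extracting them requires actually expanding the Nijenhuis torsion on the pairs $(\partial_{\mu_j},\partial_{\lambda_i})$ and $(\partial_{\lambda_i},\partial_{\lambda_j})$ in your chart (or, equivalently, invoking the Frobenius integrability of each two-dimensional eigendistribution $\ker(N-\lambda_i)$, which is the route taken in Magri's lectures). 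Until that computation is written out, the argument has a genuine gap at its most important point.
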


See \cite{Magri_Eight_2004} for a sketch of the proof of this statement.

In these coordinates the tensor $N$ takes the diagonal form
\begin{gather*}
    N = \begin{bmatrix}
        \boldsymbol{\Lambda}_n & {\bf 0}_n  \\
        {\bf 0}_n & \boldsymbol{\Lambda}_n
    \end{bmatrix},
\end{gather*}
where $ \boldsymbol{\Lambda}_n = \operatorname{diag } (\lambda _1 , \ldots , \lambda _n) $, and $ {\bf 0}_n$ is the $ n \times n $ matrix with zero entries. Moreover $ \omega _2 ^\sharp  = N (\omega _1  ^\flat) ^{ - 1 } $ takes the form
\begin{gather*}
    \omega _{ 2 } ^\sharp = \begin{bmatrix}
        {\bf 0}_n & \boldsymbol{\Lambda} _n \\
        - \boldsymbol{\Lambda}_n & {\bf 0}_n
    \end{bmatrix},
\end{gather*}
and since the matrix of $ \omega_2 ^\sharp  $ is the inverse of that of $ \omega_2 ^\flat $ and  the matrix of $   \omega_2 ^\flat $ is the negative of that of $ \omega_2 $ we have
\begin{gather*}
    \omega _{ 2 }  = \begin{bmatrix}
        {\bf 0}_n & \boldsymbol{\Lambda}_n ^{ - 1 }  \\
        - \boldsymbol{ \Lambda}_n^{ - 1 }   & {\bf 0}_n
    \end{bmatrix}.
\end{gather*}
\begin{remark}
The def\/inition of Darboux--Nijenhuis coordinates can be generalized to each set of Darboux coordinates in which $ N $ takes the diagonal form. With such more general def\/inition, one can manage the cases in which the eigenvalues are not independent.
\end{remark}

\section[Bi-affine compatibility]{Bi-af\/f\/ine compatibility}\label{section4}

We now recall the def\/inition of compatibility due to Fass\`o and Ratiu (see~\cite{Fasso_Compatibility_1998}) in the special case the f\/ibration is not only bi-isotropic but also bi-Lagrangian, since this is the only relevant case for our purposes. We refer to this type of compatibility as bi-af\/f\/ine compatibility.

\begin{definition}
    Let $M $ be a smooth manifold of dimension $ 2n $ and let $ \omega _1 $ and $ \omega _2 $ be two symplectic structures on $M $. Assume there exists a bi-Lagrangian f\/ibration $ \pi $ of $M$ with compact connected f\/ibers which is bi-af\/f\/ine (i.e., the restriction of  the Bott connections $ \nabla _{ 1 } ^B $ and $ \nabla _2 ^B $ associated with~$ \omega _1 $ and~$ \omega _2 $ to the f\/ibers, coincide).
    Then we say that $ \omega _1 $ and $ \omega _2 $ are {\it bi-affinely compatible } or {\it $\pi$-compatible}.
\end{definition}

\section[Relationship between Magri's notion of compatibility and bi-affine compatibility]{Relationship between Magri's notion of compatibility\\ and bi-af\/f\/ine compatibility}\label{section5}

Suppose we have a manifold $M$ with two Magri-compatible symplectic forms $ \omega _1 $ and~$ \omega _2 $. If every point is regular, then, as we mentioned in Section~\ref{sec:Darboux--Nijenhuis_coordinates}, the  $ I _k $'s (with $ I _k = \frac{ 1 } { k } \operatorname{Tr} (N ^k)$)  def\/ine a~bi-Lagrangian  foliation with associated distribution $L$. The link between Magri's notion of compatibility and bi-af\/f\/ine compatibility is given in the following theorem.

\begin{theorem}\label{thm:main0}
   Under the hypothesis above   the Bott connections
   \begin{gather*}
   \big(\nabla ^1 _B \big)_X Y = \omega _1 ^\sharp \mathcal{L} _{ X } (\omega _1  ^\flat Y)
   \qquad \mbox{and} \qquad \big(\nabla ^2 _B\big)_X Y = \omega _2 ^\sharp \mathcal{L} _{ X } (\omega _2  ^\flat Y)
   \end{gather*}
   in $L$ coincide. If, in particular, the Lagrangian foliation is a Lagrangian  fibration with compact connected fibers then Magri's compatibility implies bi-affine compatibility.
\end{theorem}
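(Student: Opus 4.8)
The plan is to reduce the equality of the two Bott connections to the uniqueness statement of Theorem~\ref{thm:Lagrangian_foliation}. Both $\nabla^1_B$ and $\nabla^2_B$ are partial connections in one and the same distribution $L$ (the foliation is bi-Lagrangian, so the tangent distribution to the leaves is common to $\omega_1$ and $\omega_2$). Hence it suffices to exhibit a single frame of $L$ that both connections parallelize: by the Remark following Theorem~\ref{thm:Lagrangian_foliation}, a connection in $L$ parallelizing a spanning family of Hamiltonian vector fields is unique, so two such connections must agree. The natural frame to use is built from the eigenvalues $\lambda_1,\dots,\lambda_n$ of the recursion operator $N$, which on the regular set are functionally independent and constant on the leaves, so that $L$ is exactly the common kernel of $\mathbf{d}\lambda_1,\dots,\mathbf{d}\lambda_n$.

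First I would set $Y_i=\omega_1^\sharp\,\mathbf{d}\lambda_i$. Since the $\lambda_i$ are in involution with respect to $\{\cdot,\cdot\}_1$ and the leaves are $\omega_1$-Lagrangian, each $Y_i$ is tangent to $L$, and by regularity (the $\mathbf{d}\lambda_i$ being independent and $\omega_1^\sharp$ an isomorphism) the $Y_i$ form a frame of $L$. As these are $\omega_1$-Hamiltonian fields spanning the tangent spaces to the leaves, the Remark after Theorem~\ref{thm:Lagrangian_foliation} applied to $\omega_1$ gives $\big(\nabla^1_B\big)_X Y_i=0$ for every $X\in\Gamma(L)$.

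The key step is to show that $\nabla^2_B$ annihilates the very same frame. Here I would invoke that the differentials of the eigenvalues are eigencovectors of $N$, namely $N^\ast\mathbf{d}\lambda_i=\lambda_i\,\mathbf{d}\lambda_i$ (property~(2) of Proposition~\ref{prop_DN}); combined with $N^\ast=\omega_1^\flat\omega_2^\sharp$ this yields the identity $\omega_2^\sharp\,\mathbf{d}\lambda_i=\lambda_i Y_i$. Thus $\lambda_i Y_i$ is an $\omega_2$-Hamiltonian field whose Hamiltonian $\lambda_i$ is constant on the leaves, and since $\omega_2$ is nondegenerate $N$ is invertible, so $\lambda_i\neq 0$ and the $\lambda_i Y_i$ again frame $L$. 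The Remark applied to $\omega_2$ then gives $\big(\nabla^2_B\big)_X(\lambda_i Y_i)=0$. Expanding by the derivation rule, $\big(\nabla^2_B\big)_X(\lambda_i Y_i)=X[\lambda_i]\,Y_i+\lambda_i\big(\nabla^2_B\big)_X Y_i$, and because $X\in\Gamma(L)$ is tangent to a leaf on which $\lambda_i$ is constant the term $X[\lambda_i]$ vanishes; dividing by $\lambda_i\neq 0$ gives $\big(\nabla^2_B\big)_X Y_i=0$. Both connections therefore parallelize the frame $Y_1,\dots,Y_n$, and uniqueness forces $\nabla^1_B=\nabla^2_B$ on $L$. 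For the final assertion I would merely unwind the definition of bi-affine compatibility: the equality just proved holds at every regular point, hence throughout $M$ under the standing hypothesis, and in particular on the fibers; so if the Lagrangian foliation is moreover a fibration with compact connected fibers, the restricted Bott connections coincide on the fibers, which is precisely $\pi$-compatibility.

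I expect the crux to be the central identity $\omega_2^\sharp\,\mathbf{d}\lambda_i=\lambda_i Y_i$ together with the cancellation of the $X[\lambda_i]$ term: it is this cancellation that removes the multiplicative factors $\lambda_i$ and lets the two connections parallelize \emph{literally the same} frame rather than merely spanning frames. The one genuinely nontrivial input is establishing the eigencovector relation $N^\ast\mathbf{d}\lambda_i=\lambda_i\mathbf{d}\lambda_i$; if one wishes this first proof to avoid Darboux--Nijenhuis normal forms, this relation must be justified directly from the vanishing of the Nijenhuis torsion and the simplicity of the eigenvalues on the regular set.
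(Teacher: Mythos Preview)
Your proof is correct and follows the same strategy as the paper's: both reduce to the uniqueness part of Theorem~\ref{thm:Lagrangian_foliation} (and its Remark) by producing a single frame of $L$ parallelized by both Bott connections. The paper, however, takes $X_k=\omega_1^\sharp\,\mathbf{d}I_{k+1}$ and invokes the Lenard recursion $\omega_2^\sharp\,\mathbf{d}I_k=\omega_1^\sharp\,\mathbf{d}I_{k+1}$, so that each $X_k$ is \emph{literally} Hamiltonian for both $\omega_1$ and $\omega_2$; this bypasses your rescaling and the cancellation via $X[\lambda_i]=0$, and---directly addressing your closing concern---avoids any appeal to the eigencovector relation $N^\ast\mathbf{d}\lambda_i=\lambda_i\,\mathbf{d}\lambda_i$ or to Darboux--Nijenhuis coordinates.
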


\begin{proof}
   Let $ X _k = \omega _1  ^\sharp \cdot \mathbf{d} I _{k + 1 } $ for $ k = 1 , \ldots, n $. Since  $ \omega _1 $ and $ \omega _2 $ are Magri compatible,  as we mentioned in Section~\ref{sec:Darboux--Nijenhuis_coordinates}, it follows that
   \begin{gather*} X _{ k  } = \omega _1 ^\sharp \cdot \mathbf{d} I _{ k +1 }= \omega _2 ^\sharp \cdot \mathbf{d}  I _{k}, \qquad k = 1, \ldots, n.  \end{gather*}
    Hence, the $ X _k $'s are Hamiltonian with respect to both symplectic structures and, by Theorem~\ref{thm:Lagrangian_foliation},  these vector f\/ields are parallel with respect to both~$\nabla ^1 _B $ and~$ \nabla ^2 _B $. Since the ~$ X _k $'s span the tangent space to the leaf, by the uniqueness results of  Theorem~\ref{thm:Lagrangian_foliation} and the following remark, it may be concluded  that the connections~$ \nabla ^1 _B $ and~$ \nabla ^2 _B $ coincide.
\end{proof}

Under the hypothesis of the theorem above, as a consequence of   Proposition~\ref{prop_DN}, we have that, in a neighborhood of  every point, there exist Darboux--Nijenhuis coordinates. As a consequence, we can give an alternative proof of the theorem above  that employs symplectic connections and  Darboux--Nijenhuis coordinates. This approach requires constructing explicitly the symplectic connections associated with~$ \omega _1 $ and~$ \omega _2 $. This will be done in the proof of the following theorem.

\begin{theorem}\label{thm:main}
   Under the hypothesis above   there are two torsion-free symplectic connections~$ \nabla ^1 $ and~$ \nabla ^2 $, symplectic with respect to~$ \omega _1 $ and~$ \omega _2 $, respectively, and  such that
   \begin{enumerate}\itemsep=0pt
       \item[$1)$] they preserve the  Lagrangian distribution $ L $  associated to the foliation defined by the eigenvalues $ \lambda  _1 , \ldots , \lambda _n $,
       \item[$2)$] the restrictions of~$ \nabla ^1 $ and~$ \nabla ^2 $ to  $L$ coincide with the Bott connections defined by $(\nabla ^1 _B )_X Y = \omega _1 ^\sharp \mathcal{L} _{ X } (\omega _1  ^\flat Y)  $ and  $(\nabla ^2 _B )_X Y = \omega _2 ^\sharp \mathcal{L} _{ X } (\omega _2  ^\flat Y)  $, respectively,
       \item[$3)$] the restrictions of~$ \nabla ^1 $ and~$ \nabla ^2 $ to~$L $ coincide, and thus $ \nabla ^1 _B = \nabla ^2 _B $.
   \end{enumerate}
\end{theorem}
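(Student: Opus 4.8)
The plan is to work entirely in the Darboux--Nijenhuis coordinates $(\boldsymbol{\lambda},\boldsymbol{\mu})$ supplied by Proposition~\ref{prop_DN}, to write down the two connections explicitly, and then to read off properties 1)--3) from Proposition~\ref{prop:Bott} together with one short computation. Recall that in these coordinates $\omega_1=\sum_i \mathbf{d}\lambda_i\wedge\mathbf{d}\mu_i$, and reading off the matrix of $\omega_2$ given above, $\omega_2=\sum_i \tfrac{1}{\lambda_i}\,\mathbf{d}\lambda_i\wedge\mathbf{d}\mu_i$, while the leaves are the level sets of the $\lambda_i$, so that $L=\operatorname{span}\{\partial_{\mu_1},\dots,\partial_{\mu_n}\}$. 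For $\omega_1$ I would simply take $\nabla^1$ to be the flat connection of the coordinate system, $\nabla^1_{\partial_a}\partial_b=0$ on all coordinate fields. Since $\omega_1$ has constant coefficients, $(\nabla^1_X\omega_1)(Y,Z)=X[\omega_1(Y,Z)]=0$ on coordinate fields, so $\nabla^1$ is symplectic for $\omega_1$; it is manifestly torsion-free and preserves $L$ because $\nabla^1_X\partial_{\mu_j}=0\in\Gamma(L)$.

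The crux is the construction of $\nabla^2$, since $\omega_2$ is not in Darboux form in these coordinates. The key observation is that the conformal factor $\tfrac{1}{\lambda_i}$ depends on $\lambda_i$ alone, so $\tfrac{1}{\lambda_i}\mathbf{d}\lambda_i$ is exact: setting $\phi_i=\log|\lambda_i|$ (well defined locally, since $\lambda_i\neq0$ on a regular set) gives $\mathbf{d}\phi_i=\tfrac{1}{\lambda_i}\mathbf{d}\lambda_i$ and hence $\omega_2=\sum_i\mathbf{d}\phi_i\wedge\mathbf{d}\mu_i$. Thus $(\boldsymbol{\phi},\boldsymbol{\mu})$ are Darboux coordinates for $\omega_2$, and crucially the change $\lambda_i\mapsto\phi_i(\lambda_i)$ acts only on the $\lambda$'s, so each $\partial_{\mu_j}$ is unchanged and $L$ is left untouched. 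I would then define $\nabla^2$ to be the flat connection of $(\boldsymbol{\phi},\boldsymbol{\mu})$; transforming back to $(\boldsymbol{\lambda},\boldsymbol{\mu})$ shows that its only nonzero Christoffel symbols are $\Gamma^{\lambda_i}_{\lambda_i\lambda_i}=-\tfrac{1}{\lambda_i}$, that is, $\nabla^2_{\partial_{\lambda_i}}\partial_{\lambda_i}=-\tfrac{1}{\lambda_i}\partial_{\lambda_i}$ and all other $\nabla^2_{\partial_a}\partial_b=0$. This is the ``special form'' referred to in the introduction; it is torsion-free and symplectic for $\omega_2$ by construction, and since no Christoffel symbol carries a lower $\mu$-index it preserves $L$. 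One may also verify $\nabla^2\omega_2=0$ directly from these Christoffel symbols, bypassing the coordinate change altogether.

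With both connections in hand, property~1) has been established. Property~2) is then immediate from Proposition~\ref{prop:Bott}: each $\nabla^i$ is a torsion-free symplectic connection preserving the involutive Lagrangian distribution $L$, so its restriction to $L$ is the Bott connection $\nabla^i_B$. For property~3), I would compute directly: for $Y=\sum_j b_j\partial_{\mu_j}\in\Gamma(L)$ and any $X$ one has $\nabla^1_X Y=\sum_j (X b_j)\partial_{\mu_j}$, and because $\nabla^2_X\partial_{\mu_j}=0$ one also has $\nabla^2_X Y=\sum_j (X b_j)\partial_{\mu_j}$; hence the two restrictions agree, and combined with property~2) this yields $\nabla^1_B=\nabla^2_B$.

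I expect the main obstacle to be exactly the construction of $\nabla^2$: one must produce a symplectic connection for the non-Darboux form $\omega_2$ that nonetheless preserves the \emph{same} distribution $L$ as $\nabla^1$. What makes this possible, and what the whole argument hinges on, is that the Darboux--Nijenhuis eigenvalue structure forces the conformal factor $1/\lambda_i$ to depend on $\lambda_i$ only, so that $\omega_2$ can be straightened to Darboux form by a coordinate change confined to the $\lambda$-directions, leaving the $\mu$-directions, and hence $L$, undisturbed.
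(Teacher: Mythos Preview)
Your approach is essentially the same as the paper's: both constructions produce, in Darboux--Nijenhuis coordinates, the flat connection for $\nabla^1$ and the connection for $\nabla^2$ whose only nonvanishing Christoffel symbols are $\Gamma^{i}_{ii}=-1/\lambda_i$ for $1\le i\le n$, then invoke Proposition~\ref{prop:Bott} for part~2) and compare the restrictions to $L$ for part~3). Your observation that $\phi_i=\log|\lambda_i|$ furnishes Darboux coordinates $(\boldsymbol{\phi},\boldsymbol{\mu})$ for $\omega_2$, leaving the $\mu$-directions fixed, is a pleasant addition that explains \emph{why} those particular Christoffel symbols arise (the paper simply posits them and verifies equation~\eqref{eqn:cov_dev} directly). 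The one gap in your write-up is globalization: the theorem asserts the existence of connections on $M$, whereas you have constructed $\nabla^1,\nabla^2$ only on a single Darboux--Nijenhuis chart. The paper closes this by taking a locally finite atlas of such charts, building the connections chartwise, and gluing via a subordinate partition of unity, noting that the conditions of being torsion-free, parallelizing $\omega_i$, and preserving $L$ are affine in the connection and hence survive convex combinations. You should add this step; without it parts~1) and~2) are only established locally.
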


In this approach the link between Magri's notion of compatibility and bi-af\/f\/ine compatibility  follows immediately from Theorem~\ref{thm:main} and is given in the following corollary.

\begin{corollary} If the bi-Lagrangian foliation is a fibration with compact connected fibers, then  Magri's compatibility implies bi-affine compatibility.
\end{corollary}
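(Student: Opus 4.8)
The plan is to obtain the corollary as an immediate consequence of Theorem~\ref{thm:main}, since all of the analytic work has already been carried out there; what remains is simply to check the conclusion of that theorem against the definition of bi-affine compatibility. First I would recall the standing hypothesis: $\omega_1$ and $\omega_2$ are Magri-compatible and every point is regular, so that the eigenvalues $\lambda_1,\ldots,\lambda_n$ of the recursion operator $N$ determine a bi-Lagrangian foliation with associated distribution $L$, as explained in Section~\ref{sec:Darboux--Nijenhuis_coordinates}. By Theorem~\ref{thm:main} there exist torsion-free symplectic connections $\nabla^1$ and $\nabla^2$, symplectic with respect to $\omega_1$ and $\omega_2$, whose restrictions to $L$ reproduce the Bott connections $\nabla^1_B$ and $\nabla^2_B$; and part~3 of that theorem asserts that these restrictions coincide, giving $\nabla^1_B=\nabla^2_B$ on $L$.

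Next I would bring in the additional hypothesis of the corollary, namely that this bi-Lagrangian foliation is in fact a fibration $\pi$ with compact connected fibers. This is exactly the structural datum demanded by the definition of bi-affine compatibility, which asks for a bi-Lagrangian fibration with compact connected fibers on which the two Bott connections, restricted to the fibers, agree. Assembling the pieces, the fibration $\pi$ determined by the $\lambda_i$ is bi-Lagrangian by construction, has compact connected fibers by assumption, and satisfies $\nabla^1_B=\nabla^2_B$ on its fibers by Theorem~\ref{thm:main}(3); hence, by definition, $\omega_1$ and $\omega_2$ are $\pi$-compatible, i.e., bi-affinely compatible.

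Because Theorem~\ref{thm:main} does the heavy lifting, I do not anticipate any genuine difficulty in this final step. The only point that warrants a word of care is the identification of the foliation: one must confirm that the foliation cut out by the eigenvalues $\lambda_i$ (equivalently by the integrals $I_k$) is precisely the bi-Lagrangian foliation invoked in the definition of bi-affine compatibility, and that the compactness and connectedness of the fibers is exactly the ingredient that upgrades the equality of Bott connections on the leaves to the definitional assertion of $\pi$-compatibility.
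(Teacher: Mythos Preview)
Your proposal is correct and matches the paper's own treatment: the paper states that the corollary ``follows immediately from Theorem~\ref{thm:main}'' and provides no separate proof, so your argument---invoking part~(3) of Theorem~\ref{thm:main} to obtain $\nabla^1_B=\nabla^2_B$ and then checking this against the definition of bi-affine compatibility under the added fibration hypothesis---is precisely the intended route.
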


\begin{proof}[Proof of Theorem \ref{thm:main}] {\sloppy (1) We construct the symplectic connections explicitly. The fact that~$ \nabla ^1 $ and~$ \nabla ^2 $ are symplectic and preserve~$ L $ will follow from the construction. Take an atlas of~$M$ composed by Darboux--Nijenhuis charts. On each chart one can construct a torsion-free flat  connection, symplectic with respect to~$ \omega _1 $ and preserving~$L$, by taking the linear connection whose Christof\/fel symbols vanish identically in these coordinates. One can then obtain a global connection by using partitions of unity to ``glue'' the connections obtained in each Darboux--Nijenhuis chart. This construction gives the connection~$ \nabla ^1 $.  For a more detailed explanation of this  process see the proof of  Theorem~2 in~\cite{Forger_Lagrangian_2013}.

}

    We now explain the construction of~$\nabla ^2$ in detail. Suppose we are in a  Darboux--Nijenhuis chart. Let $ \mathbf{z} = (\boldsymbol{\lambda },\boldsymbol{\mu }) $, then  $ \mathbf{e} _i = \frac{ \partial } {\partial z ^i}$ is a basis of tangent vectors. In these coordinates, the vanishing of the covariant derivative of $ \omega _2  $ is
    \begin{gather} \label{eqn:cov_dev}
        \big(\nabla ^2  _{ \mathbf{e} _k } \omega _2 \big) _{ ij } = \frac{ \partial } { \partial z ^k  } (\omega_2) _{ ij } -  \sum _l \Gamma _{ i k } ^l (\omega_2) _{ l j } -  \sum _l \Gamma _{ j k } ^l (\omega_2) _{ il } =0,
    \end{gather}
   where the coef\/f\/icients $ \Gamma _{ ij } ^k $ are called Christof\/fel symbols. Since in Darboux--Nijenhuis coordinates we have
  \begin{gather*}
      \omega _{ 2 }  = \begin{bmatrix}
          {\bf 0}_n & \boldsymbol{ \Lambda}_n ^{ - 1 }  \\
          - \boldsymbol{ \Lambda}_n^{ - 1 }   & {\bf 0}_n
      \end{bmatrix},
  \end{gather*}
the term $   \frac{ \partial } { \partial z ^k  } (\omega_2 ) _{ ij } $ is always zero except for
   \begin{itemize}\itemsep=0pt
       \item $ k = i$, $j = i + n $ with $ 1 \leq i \leq n $, in which case  $  \frac{\partial } { \partial \lambda i } (\omega _2) _{ i (i + n ) } = - \frac{ 1 } { \lambda _i ^2 } $,
     \item  $ k = j$, $i = j + n $ with $ 1 \leq j \leq n $, in which case  $  \frac{\partial } { \partial \lambda j } (\omega _2) _{ (j + n)   j } = \frac{ 1 } { \lambda _j ^2 } $.
   \end{itemize}
   Let us take  all the $ \Gamma _{ ij } ^k $ to be zero except the ones of the form $ \Gamma _{ i i } ^i = -\frac{ 1 } { \lambda _i  } $ for $ 1 \leq i \leq n $. With this choice the Christof\/fel symbols are symmetric, and thus the connection is torsion-free.

   To verify that $ \nabla ^2  $ is symplectic with respect to~$ \omega _2 $ we  need to check only equation~\eqref{eqn:cov_dev} for a~few values of $ i$, $j $ and~$k$ , since for all the other values the equation is trivially verif\/ied.  We have
   \begin{gather*}
       \frac{ \partial} { \partial z ^i } (\omega_2)  _{ i (i + n) } -\Gamma _{ i i } ^i (\omega _2) _{ i(i + n)   }    =  \frac{ \partial} { \partial \lambda  _i } (\omega_2)  _{ i (i + n) } +\frac{ 1 } { \lambda _i ^2 } = -\frac{ 1 } { \lambda _i ^2 } +\frac{ 1 } { \lambda _i ^2 } =0,
   \end{gather*}
   where $ 1 \leq i \leq n $, and
    \begin{gather*}
       \frac{ \partial} { \partial z ^j} (\omega_2)  _{(j + n)   j } -\Gamma _{ j j } ^j (\omega _2) _{ (j + n)  j }    =  \frac{ \partial} { \partial \lambda  _j } (\omega_2)  _{ (j + n) j   } - \frac{ 1 } { \lambda _j ^2 } = \frac{ 1 } { \lambda _j ^2 } -\frac{ 1 } { \lambda _j ^2 } =0,
   \end{gather*}
   where $ 1 \leq j \leq n $.
   Therefore, the connection $ \nabla ^2  $  is symplectic with respect to  $ \omega _2 $.

    To verify that $ \nabla ^2 $ parallelizes $ L $ we use the following coordinate formula
    \begin{gather*}
        \nabla ^2 _{ \mathbf{e} _j } \mathbf{u} = \left( \frac{ \partial u ^i } { \partial  z ^j } + u ^k \Gamma ^i  _{ jk } \right) \mathbf{e} _{ i },
    \end{gather*}
    where, in general, $ \mathbf{u} =  \sum _i u ^i \mathbf{e} _i $ is a vector f\/ield in $\Gamma (TM) $. Now  suppose $ \mathbf{u} \in \Gamma  (L) $, then $ \mathbf{u} = \sum \limits{ i = n + 1 } ^{ 2n } u ^i \mathbf{e} _i $. Let $ \mathbf{v} =   \nabla ^2 _{ \mathbf{e} _j } \mathbf{u} $. Since $ \Gamma _{ i j } ^{ k } \neq 0 $  if and only if $ i = j = k $, it follows that $ \mathbf{v} $ is of the form
    \begin{gather*}
        \mathbf{v} = \sum _{ i = n + 1} ^{ 2n } a ^i \mathbf{e} _i.
    \end{gather*}
    Since $u ^k \neq 0 $ only if $ k \geq n + 1 $ and $ \Gamma _{ i j } ^k \neq 0 $ only when $ 1 \leq i \leq n $, then
    \begin{gather*}a ^i =
        \begin{cases}
             \dfrac{\partial  u ^i } { \partial z ^j },  & \text{if} \ \ i \geq n + 1,\vspace{1mm}\\
             0, & \text{in all other cases}.
        \end{cases}
    \end{gather*}
    Thus, $ \mathbf{v} \in \Gamma  (L) $, that is, $ \omega _2 $ parallelizes $L$.

  Now, taking an atlas covered with  Darboux--Nijenhuis charts, passing to a locally f\/inite ref\/inement $ (U _\alpha) _{ \alpha \in A } $,  denoting the corresponding family of linear connections constructed as above by $ ( \nabla  ^2 _\alpha) _{ \alpha \in A }$ and choosing a partition of unity $ (\chi _\alpha) _{ \alpha \in A } $ subordinate to the open covering $ (U _\alpha) _{ \alpha \in A } $, we can def\/ine
      \begin{gather*}
          \nabla ^2 = \sum _{ \alpha \in A } \chi _\alpha \nabla ^2  _\alpha.
      \end{gather*}
      The conditions of parallelizing a given dif\/ferential form, of parallelizing a given vector subbundle, and of being torsion-free are all local as well as af\/f\/ine. Thus, since each of the $ \nabla^2  _\alpha $ parallelizes~$ \omega _2 $ and~$L$, it follows that~$ \nabla ^2 $ also paral\-le\-li\-zes~$ \omega _2 $ and~$ L $.

    (2) This part of the theorem follows immediately from Proposition~\ref{prop:Bott}.

    (3) On each Darboux--Nijenhuis chart the restrictions of the connections~$ \nabla ^1 $ and~$\nabla ^2$ to the leaves of the foliation coincide, since for both connections the restriction of the Christof\/fel symbols  will be of the form   $ \widetilde \Gamma _{ ij } ^k = \Gamma _{ (i + n) (j + n)} ^{ k + n } $  with $ 1 \leq i, j , k \leq n $, and hence   the Christof\/fel symbols  $ \widetilde \Gamma _{ ij } ^k $ of both connections vanish identically in these coordinates.
\end{proof}

\subsection*{Acknowledgments}
We would like to thank one of the anonymous reviewers for suggesting to us that  Theorem~\ref{thm:main0} can be proved by using the uniqueness of the connection parallelizing all the Hamiltonian vector f\/ields tangent to the leaves of a Lagrangian foliation.
This work was supported by an NSERC Discovery Grant.

\pdfbookmark[1]{References}{ref}
\LastPageEnding

\end{document}